\UseRawInputEncoding
\documentclass[aps,showpacs,preprintnumbers,twocolumn]{revtex4}

\usepackage[centertags]{amsmath}
\usepackage{amsfonts}
\usepackage{amssymb}
\usepackage{amsthm}
\usepackage{newlfont}
\usepackage{graphicx}
\input{epsfx}

\newtheorem{thm}{Theorem}[section]
\newtheorem{lem}[thm]{Lemma}

\newtheorem{defn}[thm]{Definition}

\begin{document}

\title{Are temporal quantum correlations generally non-monogamous?}
\author{Marcin Nowakowski\footnote{Electronic address: marcin.nowakowski@pg.edu.pl}}
\affiliation{Faculty of Applied Physics and Mathematics,
~Gdansk University of Technology, 80-952 Gdansk, Poland}
\affiliation{National Quantum Information Center of Gdansk, Andersa 27, 81-824 Sopot, Poland}

\pacs{03.67.-a, 03.67.Hk}

\begin{abstract}
In this paper we focus on the underlying quantum structure of temporal correlations and show their peculiar nature which differentiate them from spatial quantum correlations. We show rigorously that a particular entangled history, which can be associated with a quantum propagator, is monogamous to conserve its consistency throughout time. Yet evolving systems violate monogamous Bell-like multi-time inequalities. This dichotomy, being a novel feature of temporal correlations, has its roots in the measurement process itself which is discussed by means of the bundles of entangled histories. We introduce and discuss a concept of a probabilistic mixture of quantum processes by means of which we clarify why the spatial-like Bell-type monogamous inequalities are further violated.
We prove that Tsirelson bound on temporal Bell-like inequalities can be derived from the entangled histories approach and as a generalization, we derive the quantum bound for multi-time Bell-like inequalities.
It is also pointed out that what mimics violation of monogamy of temporal entanglement is actually just a kind of polyamory in time but monogamy of entanglement for a particular evolution still holds.

\end{abstract}

\maketitle

\section{Introduction}

Recent years have proved a great interest of quantum entanglement monogamy concept showing its usability in quantum communication theory and its applications to quantum secure key generation \cite{Lutk1, Lutk2, Lutk3, Lutk4, Devetak05, KLi}. While spatial quantum correlations and especially their non-locality became a central subject of quantum information theory and their applications to quantum computation, potentiality of application of temporal non-local correlations is poorly analyzed. Yet there is a growing interest which is related to better understanding of this peculiar quantum phenomenon. The crucial issue relates to the very nature of time and temporal correlations phenomenon with their understanding within the framework of modern quantum and relativistic theories.

Non-local nature of quantum correlations in space has been accepted as a consequence of violation of local realism, expressed in Bell's theorem \cite{Bell} and analyzed in many experiments \cite{Aspect, Freedman}. As an analogy for a temporal domain, violation of macro-realism \cite{LGI2} and Leggett-Garg inequalities \cite{LGI} seem to indicate non-local effects in time and are a subject of many experimental considerations \cite{Chu, EX1, EX2, EX3, EX4}.
There have been different formalisms proposed for study of quantum temporal correlations including Multiple-Time States (MTS) by Aharonov {\it et al.} \cite{AAD,MTS1} as part of the Two-State-Vector formalism (TSVF) \cite{ABL,Properties,TSVFR,TTI}, the Entangled Histories (EH) approach  \cite{Cot} or the pseudo-density operators (PDOs) \cite{Vedral}.

The TSVF led to surprising effects within pre- and postselected systems (e.g. \cite{Paradoxes,Pigeon,Dis}), time travel thorough post-selected teleportation \cite{Lloyd1,Lloyd2}, a novel notion of quantum time \cite{Each}, new results regarding quantum state tomography \cite{MTS2} and a better understanding of processes with indefinite causal order \cite{NewSandu}, while the Entangled Histories approach led to Bell tests for histories \cite{WC3} and have been recently used for analysis of the final state proposal in black holes \cite{CN}. The subject of the black hole information loss paradox has been also addressed with application of PDOs \cite{Marletto} but engaging a concept of non-monogamy of spatio-temporal correlations.

In this paper we study the nature of the non-mongamous behavior of temporal correlations both for ensembles of quantum processes and their single instances. We show rigorously that a particular entangled history, which can be associated with a quantum propagator, is monogamous to  conserve  its  consistency  throughout  time.   Yet  evolving  systems  violate  monogamous  Bell-like multi-time  inequalities which can be explained engaging bundles of histories with the same pre-selected and post-selected states as initial and final boundaries for the considered evolution. This dichotomy does not have a counterpart in spatial domain and as such is a novel feature of temporal non-locality but is also a sign of importance of the internal structure of single processes. In particular, we prove also that the Tsirelson bound \cite{Tsirelson} on temporal Bell-like inequalities can be derived  from the entangled histories approach analytically and as a generalization, we derive the quantum bound for multi-time Bell-like inequalities which is not accessible by spatial quantum correlations. Furthermore,  it is also discussed that what mimics violation of monogamy of temporal entanglement  is  actually  just  a  kind  of  polyamory  in  time  but  monogamy  of  entanglement  for  a particular  evolution  still  holds which might influence the discussion about the resolution of the black-hole information paradox.

It is also crucial to emphasize that due to the isomorphism that can be derived for the TSVF and the entangled histories representations \cite{NowakowskiCohen}, the results presented in this paper can be achieved also for the TSVF which is also discussed partially in this paper.


\section{Review of Entangled Histories and Multiple-Time States}

Let us review briefly the entangled histories (EH) formalism and the multiple-time states (MTS) formalism as a natural extension of the two-state vector formalism (TSVF).

The predecessor of the entangled histories is the decoherent histories approach built on the grounds of the well-known Feynman's path integral theory for calculation of probability amplitudes of quantum processes. The EH formalism extends the concepts of the consistent histories theory by allowing for complex superposition of histories. A history state is understood as an element in $\text{Proj}(\mathcal{H})$, spanned by projection operators from $\mathcal{H}$ to $\mathcal{H}$, where  $\mathcal{H}=\mathcal{H}_{t_{n}}\odot...\odot\mathcal{H}_{t_{1}}$.
The $\odot$ symbol, which we use to comply with the current literature, stands for sequential tensor products, and has the same meaning as the above $\otimes$ symbol.
The alternatives at a given instance of time form an exhaustive orthogonal set of projectors  $\sum_{\alpha_{x}}P_{x}^{\alpha_{x}}=\mathbb{I}$ and for the sample space of entangled histories $|H^{\overline{\alpha}})=P_{n}^{\alpha_{n}}\odot P_{n-1}^{\alpha_{n-1}}\odot\ldots\odot P_{1}^{\alpha_{1}}\odot P_{0}^{\alpha_{0}}$ ($\overline{\alpha}=(\alpha_{n}, \alpha_{n-1},\ldots, \alpha_{0})$), there exists $c_{\overline{\alpha}} \in \mathbb{C}$ such that $\sum_{\overline{\alpha}}c_{\overline{\alpha}}|H^{\overline{\alpha}})=\mathbb{I}$.

As an example, one can take a history $|H)=[z^+]\odot[x^-]\odot[y^-]\odot[x^+]=[|z^+\rangle\langle z^+|]\odot[|x^-\rangle\langle x^-|]\odot[|y^-\rangle\langle y^-|]\odot[|x^+\rangle\langle x^+|]$ for a spin-$\frac{1}{2}$ particle being in an eigenstate of the Pauli-X operator at time $t_1$, in an eigenstate of the Pauli-Y operator at time $t_2$, and so on. Within this formalism one also defines the unitary bridging operators $\mathcal{T}(t_j,t_i):\mathcal{H}_{t_i}\rightarrow\mathcal{H}_{t_j}$ evolving the states between instances of time, and having the following properties: $\mathcal{T}(t_j,t_i)=\mathcal{T}^{\dagger}(t_i,t_j)$ and $\mathcal{T}(t_j,t_i)=\mathcal{T}(t_j,t_{j-1})\mathcal{T}(t_{j-1},t_i)$.
This formalism introduces also the chain operator $K(|H^{\overline{\alpha}}))$, which can be directly associated with a time propagator of a given quantum process:
\begin{equation}
K(|H^{\overline{\alpha}}))=P_{n}^{\alpha_n}\mathcal{T}(t_{n},t_{n-1}) P_{n-1}^{\alpha_{n-1}}\ldots P_{1}^{\alpha_1}\mathcal{T}(t_{1},t_{0})P_{0}^{\alpha_0}
\end{equation}
This operator plays a fundamental role in measuring a weight of any history $|H^{\alpha})$:
\begin{equation}
W(|H^{\alpha}))=TrK(|H^{\alpha}))^{\dagger}K(|H^{\alpha}))
\end{equation}
which can be interpreted as a realization probability of a history by the Born rule application. The histories approach requires also that the family of histories is consistent, i.e. one can associate with a union of histories a weight equal to the sum of weights
associated with particular histories included in the union.

Multiple-Time States (MTS) extend the standard quantum mechanical state by allowing its simultaneous description in several different moments. Such a multiple-time state may encompass both forward- and backward-evolving states on equal footing. MTS represent all instances of collapse (i.e. those moments in time when the quantum state coincided with an eigenstate of some measured operator) and allow them to evolve both forward and backward in time. This evolution backwards in time can be understood literally (giving rise to the Two-Time Interpretation \cite{TTI}), but this is not necessary, it can be simply regarded as a mathematical feature of the formalism (which is, in fact, equivalent to the standard quantum formalism \cite{TSVFR}). MTS live in a tensor product of Hilbert spaces $\mathcal{H}$ admissible at those various instances of time ($t_1<...<t_n$) denoted by \cite{MTS1}
\begin{equation}\label{HMTF}
\mathcal{H}=\mathcal{H}_{t_{n}}^{(\cdot)}\otimes...\otimes\mathcal{H}_{t_{k+1}}^{\dagger}\otimes\mathcal{H}_{t_{k}}\otimes\mathcal{H}_{t_{k-1}}^{\dagger}\otimes...\otimes\mathcal{H}_{t_{1}}^{(\cdot)},
\end{equation}
where a dagger means the corresponding Hilbert space consists of states which evolve backwards in time. The initial and final Hilbert spaces might be daggered or not (this is denoted by a ``$\cdot$'' superscript). All Hilbert spaces containing either (forward-evolving) kets or (backward-evolving) bras are alternating to allow a time-symmetric description at any intermediate moment.

As an example of (a separable) MTS we can consider the following state: $_{t_4}\langle z^+||x^-\rangle_{t_3~t_2}\langle y^-||x^+\rangle_{t_1} \in \mathcal{H}_{t_{4}}^\dagger\otimes\mathcal{H}_{t_{3}}\otimes\mathcal{H}_{t_{2}}^\dagger\otimes\mathcal{H}_{t_{1}}$. This multiple-time state represents an initial eigenstate of the Pauli-X operator evolving forward in time from $t_1$ until collapse into an eigenstate of the Pauli-Y operator occurs at time $t_2$. Later on, at time $t_3$ the system is projected again onto a different eigenstate of the Pauli-X operator. Finally at $t_4$ the system is measured in the Z basis, and the resulting eigenstate evolves backward in time. In the following we will focus on two-time states (sometimes called two-states), which consist of a forward evolving state $|\psi_1\rangle_{t_{1}}$ and a backward evolving state $|\psi_2\rangle_{t_{2}}$ in the above form $_{t_2}\langle \psi_2| |\psi_1\rangle_{t_1}$ to achieve a richer description of a quantum system during the time interval $t_1\le t \le t_2$ \cite{TSVFR}.

Given an initial state $|\Psi\rangle$ and a final state $\langle \Phi|$, the probability that an intermediate measurement of some hermitian operator $A$ will result in the eigenvalue $a_n$ is given by the ABL formula \cite{ABL}

\begin{equation} \label{ABL}
p(A=a_n)=\frac{1}{N}|\langle\Phi|U_{2}P_{n}U_{1}|\Psi\rangle|^2,
\end{equation}
where $U_i$ represent unitary evolution, the operator $P_n$ projects on $|a_n\rangle$ and
\begin{equation}
N \equiv \sum_k |\langle\Phi|U_{2}P_{k}U_{1}|\Psi\rangle|^2.
\end{equation}
This probability rule is important in that it uses the information available through the final state in a way which is manifestly time-symmetric.

\section{Monogamy of a particular quantum process}

The fundamental property of spatial quantum entanglement is its monogamy. This property states that for a tripartite system ABC, maximal entanglement of the pair AB excludes its non-local correlations with the third party, i.e. if $\rho_{AB}=|\Psi^+\rangle\langle\Psi^+|$, then any extension of this state is of the form $\rho_{ABC}=|\Psi^+\rangle\langle\Psi^+|\otimes |\Psi\rangle\langle\Psi|$. For the temporal correlations, it seems that this property does not hold, especially when one considers statistical distribution of measurement results \cite{White, Marletto}. Yet, what is obvious in the spatial case does not have mere analogies in the temporal case. We will show now that a particular history can be monogamous but further we will discuss how temporal correlations can lead to non-monogamous results for bundles of histories with which we tackle during the measurement process. This subtlety is rather a sign of a deeper nature of quantum processes which can keep their consistency for particular instances, yet leads do quite counter-intuitive results for their ensembles.

Suppose we have two non-equivalent multi-time entangled histories of an evolving qubit through times ${t_4> t_3> t_2> t_1}$ for which we consider the past effect of the measurement at time $t_4$:

\begin{eqnarray}
    |H_1)&=&\frac{1}{\sqrt{2}}[|0)\odot|0)\odot|0)\odot|0)+|1)\odot|1)\odot|1)\odot|1)] \nonumber \\
    |H_2)&=&\frac{1}{\sqrt{2}}|0)\odot[|0)\odot|0)\odot|0)+|1)\odot|1)\odot|1)]
\end{eqnarray}

The history $|H_1)$ can be perceived as a superposition of two histories on times ${t_4, t_3, t_2, t_1}$. If one measures this evolution at time $t_4$ with dichotomic projective observables $P_0=|0\rangle\langle 0|$ and $P_1=|1\rangle\langle 1|$, we can conclude that the state was with probability $p_0=\frac{1}{2}$ in a history $|H_{10})=|0)\odot|0)\odot|0)$ at previous times and with probability $p_1=\frac{1}{2}$ in a history $|H_{11})=|1)\odot|1)\odot|1)$. Alternatively, one can consider an ensemble of history states $\{\{p_0, |H_{10})\},\{p_1, |H_{11})\}\}$, i.e. half of the qubits evolving trivially in a history $|H_{10})$ and half in $|H_{11})$ through times ${t_3, t_2, t_1}$ which can be represented by a history super-operator $\rho_H=\frac{1}{2}(|H_{10})(H_{10}| + |H_{11})(H_{11}|)$.
This evolution is different for the history $|H_2)$. If one performs the same measurements at time $t_4$, then we get an entangled history through times ${t_3, t_2, t_1}$ for the projective measurement $P_0$ at time $t_4$. Thus, physically we can propose the concept of \textit{the probabilistic mixture of histories}:

\begin{defn}
A mixed history state is defined as a positive super-operator acting on a history state space:
\begin{equation}
\rho_{hist}=\sum_{i}p_i |H_i)(H_i|
\end{equation}
where $Tr\rho_{hist}=1$, $\sum_{i} p_i=1$ and $\forall_i 1>p_i\geq 0 $.
\end{defn}
This mixture of histories can be naturally associated with an ensemble of histories $\{p_i, |H_i)\}$.
Following, we consider an example of a spin particle traversing two paths to check a future influence of the measurement at time $t_1$:

\textit{Example 1. }Imagine a spin-$\frac{1}{2}$ particle at three times $\{t_3, t_2, t_1\}$ evolving trivially by $\mathcal{T}=\mathbb{I}$ with a family of entangled histories:
\begin{eqnarray}
|H^1)&=&\sqrt{2}([z^{+}]\odot[x^{+}]\odot[z^{+}]+[z^{-}]\odot[x^{-}]\odot[z^{+}])\nonumber\\
|H^2)&=&\sqrt{2}([z^{-}]\odot[x^{+}]\odot[z^{+}]+[z^{+}]\odot[x^{-}]\odot[z^{+}])\nonumber\\
|H^3)&=&\sqrt{2}([z^{+}]\odot[x^{+}]\odot[z^{-}]+[z^{-}]\odot[x^{-}]\odot[z^{-}])\nonumber\\
|H^4)&=&\sqrt{2}([z^{-}]\odot[x^{+}]\odot[z^{-}]+[z^{+}]\odot[x^{-}]\odot[z^{-}])\nonumber\\
\end{eqnarray}
If we consider a state $|\Phi)=\frac{1}{\sqrt{2}}|H^1)+\frac{1}{\sqrt{2}}|H^2)$, then a particle, measured at time $t_1$ and having a spin up in a direction $z^{+}$, can evolve within the history $|H^1)$ with probability $P(|H^1))=\frac{1}{2}$ and be in the history $|H^2)$ with probability $P(|H^2))=\frac{1}{2}$.\\
Noteworthily, one can also find in the space of histories $\mathcal{S}=span\{|H^1), |H^2), |H^3), |H^4)\}$ the following temporal GHZ-like vector \cite{Cot} (normalized for $|\alpha|^2+|\beta|^2=1$):
\begin{eqnarray}
|\tau GHZ)&=&\frac{\alpha}{\sqrt{2}}|H^1)+\frac{\alpha}{\sqrt{2}}|H^2)+\frac{\beta}{\sqrt{2}}|H^3)+\frac{\beta}{\sqrt{2}}|H^4)\nonumber\\
&=&\alpha[z^{+}]\odot[z^{+}]\odot[z^{+}]+\beta[z^{-}]\odot[z^{-}]\odot[z^{-}] \nonumber\\
\end{eqnarray}


Having grounded the concept of a mixture of histories, to better understand the behavior of entangled histories in a context of monogamy, let us consider the Mach-Zehnder interferometer (Fig.\ref{Mach-Zender Interferometer}) and potential superposed evolutions of a photon state in the context of monogamy problem of temporal entanglement.

There are many potential histories allowed in the interferometer which can be also steered by selection of the external detection points but let us consider firstly the one displaying quantum entanglement in time at times $\{t_2, t_1\}$:
\begin{equation}
    |H)=\alpha|\phi_{3,2})\odot(|\phi_{2,1})\odot |\phi_{1,1})+|\phi_{2,2})\odot |\phi_{1,2}))\odot |\phi_0)
\end{equation}
This particular history can be realized by placement of a detector at time $t_3$ which detects a photon in state $\rho=|\phi_{3,2}\rangle\langle \phi_{3,2}|$ and displays quantum entanglement in time for times $\{t_2, t_1\}$:
\begin{equation}
 |H_{t_2, t_1})=\alpha(|\phi_{2,1})\odot |\phi_{1,1})+|\phi_{2,2})\odot |\phi_{1,2}))
\end{equation}

Interestingly, this entangled history at times $\{t_2, t_1\}$ cannot be derived from the following temporal version of a GHZ-state which also can be realized in this interferometer:
\begin{equation}
|\widetilde{H})=\alpha(|\phi_{3,1})\odot |\phi_{2,1})\odot |\phi_{1,1})+|\phi_{3,2})\odot |\phi_{2,2})\odot |\phi_{1,2}))\odot |\phi_0)
\end{equation}

Let us observe that the reduced component of this history $|\phi_{3,1})\odot |\phi_{1,1})$ is correlated with $|\phi_{2,1})$ and not with $|\phi_{2,2})$. Thus, reduction of $|\widetilde{H})$ over times $t_2$ and $t_0$ is not a complex superposition of histories but is \textit{a probabilistic mixture} as already stated in this section:
\begin{equation}
\rho_{t_1t_3}=\alpha\alpha^*(|\phi_{3,1}\phi_{1,1})(\phi_{3,1}\phi_{1,1}|+|\phi_{3,2}\phi_{1,2})(\phi_{3,2}\phi_{1,2}|)
\end{equation}
where $|\phi_{3,1}\phi_{1,1})=|\phi_{3,1})\odot |\phi_{1,1})$ etc.
This can be also formally derived employing a temporal  partial trace operator \cite{TraceTime} over time instances:
$\rho_{t_1t_3}=Tr_{t_2t_0}|\widetilde{H})(\widetilde{H}|$. This operator is an analogy of spatial tracing out but has to keep consistency of the evolution - a condition which is not present in spatial case.

On the contrary, if this reduction of the temporal GHZ-state $|\tau GHZ)$ would be a complex superposition of entangled histories, i.e. it could be always expanded to a history of the following type e.g. $|\varphi_{t_{x}})\odot(|\varphi_{3,1})\odot |\varphi_{1,1})+|\varphi_{3,2})\odot |\varphi_{1,2}))$ or the aforementioned $|H)$ which implies decorrelation with the next instance of the history in such a case by employing e.g. a projective measurement. We should emphasize that this reasoning is in agreement with the Feynman's addition rule for probability amplitudes.

It is important
to note that these considerations are related to $|H)(H|$ - observable and the particular history $|H)$. Yet, other histories in the Mach-Zehnder interferometer are also accessible. It shows clearly a physical sense
of quantum entanglement in time and further a concept of its monogamy for a particular entangled history.

\begin{figure}[h]
\centerline{\includegraphics[width=8cm]{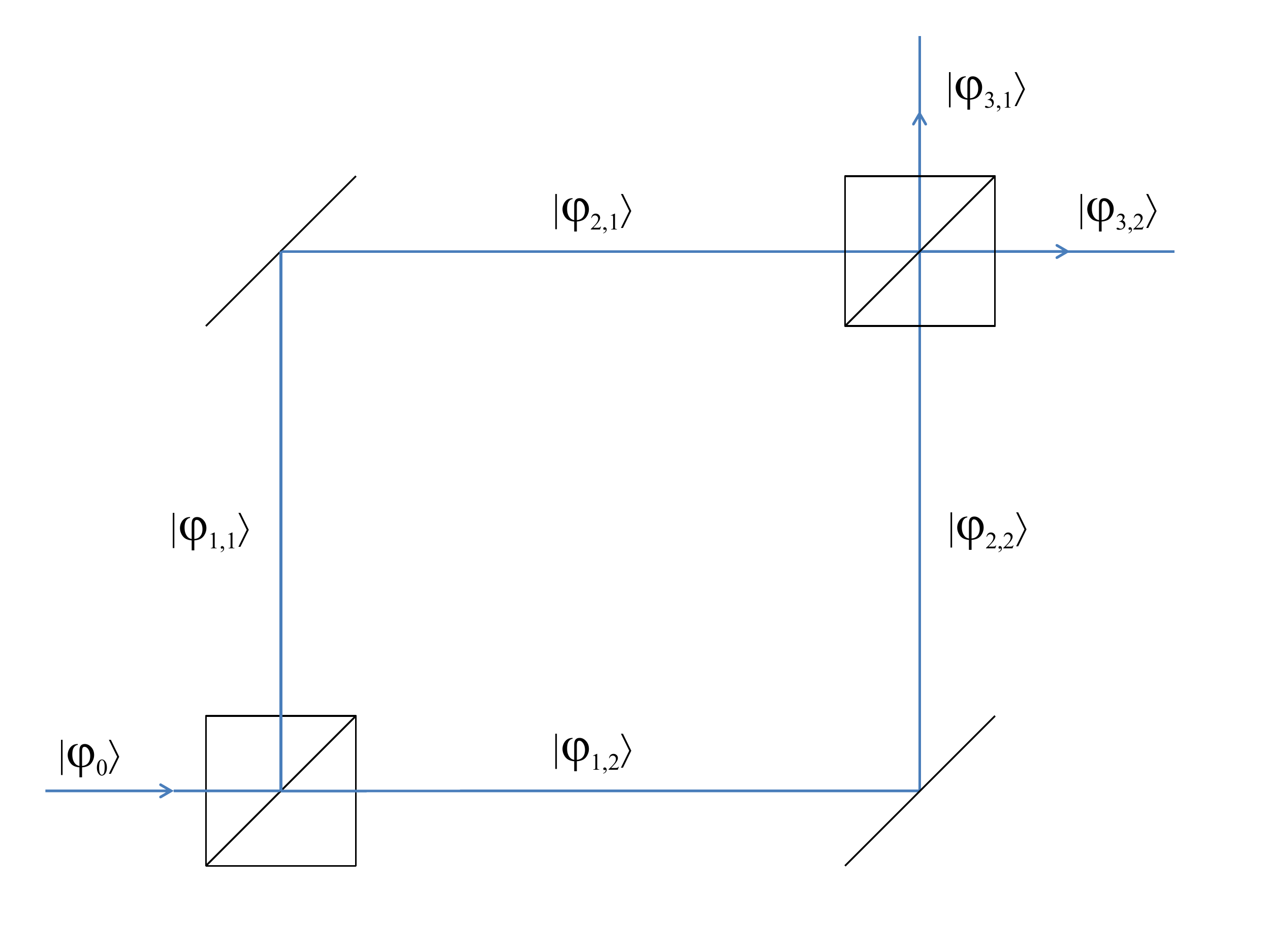}}
 \caption[Mach-Zehnder Interferometer]{The Mach-Zehnder interferometer with an input state $|\varphi_{0}\rangle$ - a vacuum state is omitted which does not change further considerations. The beam-splitters can be represented by Hadamard operation
 acting on the spatial modes.
 }
 \label{Mach-Zender Interferometer}
\end{figure}

We can summary these considerations with the following lemma about temporal entangled histories:

\begin{lem}
There does not exist  any such a history $|H_{ABC})$ acting on space $\mathcal{H}_{ABC}=\mathcal{H}_{t_3}\odot\mathcal{H}_{t_2}\odot\mathcal{H}_{t_1}$ at times $\{t_{3},t_{2},t_{1}\}$ so that one can find reduced histories $|H_{AB})=\frac{1}{\sqrt{2}}(|e_{0})\odot|e_{0})+|e_{1})\odot|e_{1}))$ and $|H_{BC})=\frac{1}{\sqrt{2}}(|e_{0})\odot|e_{0})+|e_{1})\odot|e_{1}))$.
\end{lem}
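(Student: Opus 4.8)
The plan is to argue by contradiction, assuming such a tripartite history $|H_{ABC})$ exists and deriving an inconsistency from the two maximal-entanglement constraints imposed on its overlapping reductions. First I would fix the identification $A\leftrightarrow t_3$, $B\leftrightarrow t_2$, $C\leftrightarrow t_1$ and expand the candidate in the product basis,
\begin{equation}
|H_{ABC})=\sum_{i,j,k\in\{0,1\}}c_{ijk}\,|e_i)\odot|e_j)\odot|e_k),
\end{equation}
noting that the sequential product $\odot$ together with the history inner product endows $\mathcal{H}_{ABC}=\mathcal{H}_{t_3}\odot\mathcal{H}_{t_2}\odot\mathcal{H}_{t_1}$ with the structure of a tripartite tensor-product space, on which the reduced histories are computed by the temporal partial trace introduced above.

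The central step is to exploit the purity requirement on the reduced histories. Since $|H_{AB})$ is demanded to be a \emph{pure} maximally entangled history, the reduction obtained by tracing out the $t_1$ factor must itself be a rank-one history super-operator. By a Schmidt decomposition of $|H_{ABC})$ across the $AB|C$ cut, a pure reduced object $\rho_{AB}=|H_{AB})(H_{AB}|$ is possible only if the full history factorizes, $|H_{ABC})=|H_{AB})\odot|\chi_C)$ for some single-time history $|\chi_C)$ at $t_1$; any genuine entanglement of $C$ with $AB$ would send the $AB$ reduction to a probabilistic mixture, which is exactly the GHZ-type phenomenon already exhibited by the temporal partial trace earlier in this section.

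With this factorization in hand the contradiction is immediate. Tracing out the $t_3$ factor of $|H_{AB})\odot|\chi_C)$ decouples $B$ from $C$: because $|H_{AB})$ is maximally entangled across $t_3$ and $t_2$, its $B$-reduction is the maximally mixed super-operator, so $\rho_{BC}=\tfrac{1}{2}\mathbb{I}_{t_2}\odot|\chi_C)(\chi_C|$ is a product (separable) history. This can never coincide with the pure, maximally entangled $|H_{BC})(H_{BC}|$ required by hypothesis, so no consistent $|H_{ABC})$ can exist. The symmetry of the two constraints guarantees that the same obstruction appears if one instead begins from the purity of $|H_{BC})$.

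I expect the genuinely delicate point to be the justification that purity of a reduced history enforces factorization across the traced-out time, i.e. that the temporal partial trace, constrained to preserve consistency of the history family, reproduces the ordinary partial-trace behavior on the $\odot$-structure. This is precisely where the temporal character of the problem enters, and where one must lean on the consistency condition and the GHZ-reduction analysis rather than on a purely formal spatial computation; once that reduction rule is secured, the monogamy obstruction follows by the standard pure-state argument.
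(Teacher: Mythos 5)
Your endgame is fine: if $|H_{ABC})$ factorized as $|H_{AB})\odot|\chi_C)$, the $BC$ reduction would be a product super-operator and could never equal the pure entangled $|H_{BC})(H_{BC}|$, which is the same kind of contradiction the paper reaches. The genuine gap is your middle step, which you flag yourself but never close: the claim that purity of $\rho_{AB}$ forces the factorization $|H_{ABC})=|H_{AB})\odot|\chi_C)$ ``by a Schmidt decomposition across the $AB|C$ cut.'' That implication is a theorem about tensor-product Hilbert spaces carrying the product inner product, and the history space is not such a space under its physical inner product. In the entangled-histories formalism the inner product is induced by the chain operator, $(Y_1|Y_2)=Tr[K(|Y_1))^{\dagger}K(|Y_2))]$, and $K$ sends a product history $P_3\odot P_2\odot P_1$ to the operator product $P_3\mathcal{T}(t_3,t_2)P_2\mathcal{T}(t_2,t_1)P_1$, not to a tensor product. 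Consequently many product histories are null vectors (e.g. $[1]\odot[0]$ under trivial bridging), the physical history space is far smaller than $\mathcal{B}(\mathcal{H}_{t_3})\otimes\mathcal{B}(\mathcal{H}_{t_2})\otimes\mathcal{B}(\mathcal{H}_{t_1})$, and the temporal partial trace is defined only relative to an orthonormal \emph{consistent} family of histories; the paper stresses that this consistency requirement has no spatial counterpart. Had the lemma been about the ordinary Hilbert-Schmidt partial trace on the formal tensor product, your argument would indeed be the standard spatial monogamy proof; but for the temporal reduction, ``pure reduction implies factorization'' is not an available standard fact, and proving it is essentially of the same difficulty as the lemma itself. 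Your proposal therefore rests on an unproven premise that carries the full weight of the statement.

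The paper closes exactly this hole by a different middle step. It expands the transition amplitude associated with $|H_{ABC})$ through the $K$-operator, $\Gamma_{ABC}=\sum_{ijk}\alpha_{ijk}\Gamma_{ijk}$, and observes that the assumed reductions $|H_{AB})=|H_{BC})=\frac{1}{\sqrt{2}}(|e_0)\odot|e_0)+|e_1)\odot|e_1))$ admit only the diagonal sub-histories $|e_0)\odot|e_0)$ and $|e_1)\odot|e_1)$: any cross component such as $\Gamma_{i01}$ would make a forbidden sub-history $|e_0)\odot|e_1)$ realizable. Hence $\Gamma_{ABC}=\alpha_{000}\Gamma_{000}+\alpha_{111}\Gamma_{111}$, i.e.\ $|H_{ABC})$ is forced into temporal GHZ form, and the contradiction follows from the fact, established earlier in that section via the temporal partial trace, that GHZ-form histories reduce to \emph{probabilistic mixtures}, never to pure entangled histories. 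Note the two routes force different structures ($|H_{ABC})$ GHZ-like versus $|H_{ABC})$ factorized), so yours is not a paraphrase of the paper's; to salvage it you must either prove the purity-factorization dichotomy for the consistency-preserving temporal reduction, or replace your Schmidt step with the propagator-component exclusion argument above.
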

\begin{proof}
The assumption that $|H_{AB})=|H_{BC})$ implies immediately dimension of the history $|H_{ABC})$. To prove this lemma, we apply the formalism of Feynman propagators.
As already mentioned, the particular history $|H_{ABC})$ of a physical process $A B C$ can be associated with a transition amplitude $\Gamma_{ABC}$
($\Gamma_{ABC} \sim e^{iS_{ABC}}$ where $S_{ABC}$ stands for an action functional of the process) by means of the $K$-operator:
\begin{eqnarray}
\Gamma_{ABC}&=&\sum_{ijk}\alpha_{ijk}\langle e_{i}|e^{iS_{ij}^{AB}}|e_{j}\rangle\langle e_j|e^{iS_{jk}^{BC}}|e_k\rangle \; (\alpha_{ijk}\in \mathcal{C})\nonumber \\
&=&\sum_{ijk}\alpha_{ijk} \Gamma_{ijk}\nonumber
\end{eqnarray}
Note that the assumed dimension of the subsystems implies the choice of the basis $\{|e_{i}\rangle\}$ for this propagator.
Since we consider a history $|H_{ABC})$ with reduced histories (for sub-processes) $|H_{AB})=|H_{BC})=\frac{1}{\sqrt{2}}(|e_{0})\odot|e_{0})+|e_{1})\odot|e_{1}))$, one can associate with them the following transition probability amplitudes:
\begin{equation}
\Gamma_{AB}=\Gamma_{BC}=\alpha_{00}\Gamma_{00}+\alpha_{11}\Gamma_{11}
\end{equation}
and the existence of summands $\Gamma_{00}$ and $\Gamma_{11}$ implies that:
\begin{equation}
\Gamma_{ABC}=\sum_{i}(\alpha_{i00}\Gamma_{i00}+\alpha_{i11}\Gamma_{i11})=\sum_{j}(\alpha_{00j}\Gamma_{00j}+\alpha_{11j}\Gamma_{11j})
\end{equation}
On the contrary, if there exists a component e.g. $\Gamma_{i01}$ in the propagator $\Gamma_{ABC}$, the history $|H_{i01})$ could be also realized. Yet it is not the case for the assumed global history $|H_{ABC})$ for which we do not consider sub-histories $|e_0)\odot|e_1)$. Thus, only the following elements can contribute to the global propagator and corresponding history:
\begin{equation}
\Gamma_{ABC}=\alpha_{000}\Gamma_{000}+\alpha_{111}\Gamma_{111}
\end{equation}
However, this propagator, as already discussed, could be associated with a temporal version of GHZ-like state: $|\tau GHZ)=\alpha_{000} |0)\odot|0)\odot|0) +  \alpha_{111}|1)\odot|1)\odot|1)$ (we assume that the coefficients are non-zero) from which we cannot derive reduced entangled history, thus, contradicting the structure of the global history $|H_{ABC})$.
\end{proof}

As a consequence, it is natural to conclude that \textit{any Feynman propagator is also monogamous}. This lemma holds naturally also for any finite dimension n of the system $A$ evolving in time.

A natural consequence of entanglement monogamy in space is that 
we cannot build a quantum spatial state where a chosen party is entangled with an infinite number of parties. In principle, if we consider Feynman path integral which integrates all probability amplitudes over possible evolution paths between two space-time points, one can state a question about correlations between a state of a system at a chosen time $t_{x}$ and all other times separated by $dt$ in this
evolution. Suppose that we are considering a two-state history $|F)=[x_{E}]\odot [x_{S}]$ where a particle is localized at $x_{S}$ at time $t_{S}$ (our initial state is $|S\rangle=|x_{S}\rangle$) and evolves to the final state $|E\rangle$ localized at $x_{E}$ at time $t_{E}$. This history can be further expanded
as a Feynman path integral \cite{Schwartz, Feynman} assuming breaking down the evolution time into $n$ small time intervals $\delta t$:
\begin{eqnarray}
\langle E|S\rangle &=&\int \mathcal{D}x \exp\{\frac{i}{\hslash} \int_{x_S, t_S}^{x_E, t_E}dt\mathcal{L} \} \\
&=&\int dx_{n}\cdots dx_{1} \langle x_{E}| e^{-iH(t_{E})\delta t}|x_{n}\rangle\langle x_{n}|\cdots \\ \nonumber
 && |x_{2}\rangle\langle x_{2}| e^{-iH(t_{2})\delta t}|x_{1}\rangle\langle x_{1}|e^{-iH(t_{1})\delta t} |x_{S}\rangle \nonumber
\end{eqnarray}
where we assumed evolution steered by hamiltonian $H$ being a smooth function of $t$.

This evolution can be represented as a two-time history with an initial pre-selected and final well-defined state: $|F)=[x_{E}]\odot [x_{S}]$ for times $t_{S}$ and $t_{E}$. However, what is substantial in this consideration, all intermediate times are undefined for the external observer of the evolution so this particular history is separable for an observer pre-selecting the state $[x_{S}]$ and post-selecting $[x_{E}]$. In a consequence, asking for correlations of the state at time e.g. $t_{S}$ and say $t_{S}+\delta t$ does not make sense (unless $t_{E}=t_{S}+\delta t$).

We can represent the product $|E\rangle\langle E|S\rangle\langle S|$ by means of integration over histories (it is crucial to remember that in similarity to path integral summands not every
particular history summand is assumed to be consistent, i.e. physically realizable):
\begin{equation}
\begin{split}
|E\rangle\langle E|S\rangle\langle S|=\int dx_{n}\cdots dx_{1} K([x_{E}]\odot[x_{n}]\odot\ldots \\
\odot[x_{2}]\odot[x_{1}]\odot[x_{S}])
\end{split}
\end{equation}
It represents an expansion of a quantum propagator with quantum histories contracted by K-operator.

To deepen our understanding of differences between spatial and temporal correlations, let us reconsider a spatial and temporal version of $GHZ$-state from a resource perspective:
\begin{eqnarray}
    |\Psi_{ABC}\rangle&=&\frac{1}{\sqrt{2}}(|000\rangle + |111\rangle) \nonumber \\
    |\Psi_{t_2 t_1 t_0})&=&\frac{1}{2}([0]\odot[0]\odot[0]+[1]\odot[1]\odot[1])
\end{eqnarray}

While a three-qubit spatial state $|\Psi_{ABC}\rangle$ cannot be simply extended to $n$-qubit state due to lack of additional resources, a temporal state can be always extended to $n$-times (reaching infinity as a limit if we assume that time slicing does not have its limit like a Planck time) between the constrained past $t_{0}$ and future $t_{2}$ as far as the time steps are defined for the external observer:

\begin{eqnarray}
    |\Psi_{ABC}\rangle & \nrightarrow & |\Psi_{ABCC_1\ldots}\rangle\\
    |\Psi_{t_2 t_1 t_0})&\rightarrow &|\Psi_{t_2\ldots t_1 t_{0n}\ldots t_{01} t_0})
\end{eqnarray}

This tricky feature of temporal correlations is one of the reasons of polyamoric nature of time  as we will see in next paragraphs.

\section{Tsirelson's Bound from Entangled Histories and General Bound for temporal correlations}

The violation of local realism (LR) \cite{Bell} and macrorealism (MR) \cite{LGI2} by quantum theories has been studied for many years in experimental setups where measurements' data are tested against violation of Bell inequalities for LR and Leggett-Garg inequalities (LGI) \cite{LGI} for MR. For quantum theories, the former raises as a consequence of non-classical correlations in space while the latter as a consequence of non-classicality of dynamic
evolution. In this section we show that entangled histories approach gives the same well-known Tsirelson bound \cite{Tsirelson} on quantum correlations for LGI as quantum entangled states in case of bi-partite spatial correlations for CHSH-inequalities and  derive a general quantum bound on multi-time Bell-like inequalities.

In the temporal version of CHSH-inequality being a modification of original Leggett-Garg inequalities, Alice performs measurement at time $t_{1}$ choosing between two dichotomic observables $\{A_{1}^{(1)}, A_{2}^{(1)}\}$ and then Bob performs a measurement at time $t_{2}$ choosing between $\{B_{1}^{(2)}, B_{2}^{(2)}\}$. Therefore, the structure of this LGI can be represented as follows \cite{Vedral}:
\begin{equation}
S_{LGI}\equiv c_{12}+c_{21}+c_{11}-c_{22} \leq 2
\end{equation}
where $c_{ij}=\langle A_{i}^{(1)}, B_{j}^{(2)}  \rangle$ stands for the expectation value of consecutive measurements performed at time $t_{1}$ and $t_{2}$.

Since one can build in a natural way $\mathcal{C}^{*}$-Algebra of history operators for normalized histories from projective Hilbert spaces equipped with a well-defined inner product, we provide reasoning about bounding the LGI purely on the space of entangled histories and achieve the quantum bound $2\sqrt 2$ of CHSH-inequality specific for spatial correlations. The importance of this result achieved analytically is due to the fact that
previously it was derived basing on convex optimization methods by means of semi-definite programming \cite{Budroni} and by means of correlator spaces \cite{Fritz} not being equivalent to probability space (probability conditional distributions
of consecutive events) without underpinning mathematical structure of quantum temporal states.

In a temporal setup one considers measurements $\mathbb{A}=I \odot \mathbb{A}^{(1)}$ (measurement $\mathbb{A}$ occurring at time $t_{1}$) and $\mathbb{B}=\mathbb{B}^{(2)}\odot I$ which is in an exact analogy to the proof of the above theorem for a spatial setup. The history with 'injected' measurements can be represented as $|\widetilde{H})=\alpha \mathbb{A}\mathbb{B}|H)\mathbb{A}^{\dagger}\mathbb{B}^{\dagger}$ where $\alpha$ stands for a normalization factor.
History observables are history state operators which are naturally Hermitian and their eigenvectors can generate a consistent history family\cite{Cot}. \\
As an example, we can consider  spin $\frac{1}{2}$-particle with a history inducing evolution $|\psi(t_1)\rangle\rightarrow|\psi(t_2)\rangle$ on which we act with $\sigma_y\odot\sigma_x$ operation. This step results with a new effective history:
\begin{equation}
|\widetilde{H})=\alpha \sigma_y[\psi(t_2)]\sigma_y^{\dagger} \odot \sigma_x[\psi(t_1)]\sigma_x^{\dagger}
\end{equation}
For an observable $A=\sum_{i}a_{i}|H_{i})(H_{i}|$, its measurement on a history $|H)$ generates an expectation value $\langle A\rangle=Tr(A|H)(H|)$ (i.e. the result $a_{i}$ is achieved with probability $|(H|H_{i})|^{2}$) in analogy to the spatial case. Thus, one achieves history $|\widetilde{H})$ as a realized
history with measurements and the expectation value of the history observable $\langle A \rangle$. It is worth mentioning that $|\widetilde{H})$ and $|H)$ are both compatible histories, i.e. related by a linear transformation. Equipped with the aforementioned findings about history observables, one can state now the following lemma:
\begin{lem}
For any history density matrix $W$ and Hermitian history dichotomic observables $A_{i}=I\odot A_{i}^{(1)}$ and $B_{j}= B_{j}^{(2)} \odot I$ where $i,j \in \{1,2\}$ the following bound holds:
\begin{eqnarray}
S_{LGI}&=&c_{11}+c_{12}+c_{21}-c_{22}\\
&=&Tr((A_{1}B_{1}+A_{1}B_{2}+A_{2}B_{1}-A_{2}B_{2})W)\nonumber \\
&\leq& 2\sqrt{2}
\nonumber
\end{eqnarray}
\end{lem}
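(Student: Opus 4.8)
The plan is to transpose the algebraic Tsirelson–Landau argument, normally carried out for spatial observables on a tensor-product Hilbert space, onto the $C^{*}$-algebra of history operators introduced above. Collecting the four correlators into a single Bell--CHSH history operator
\begin{equation}
\mathcal{B}=A_{1}B_{1}+A_{1}B_{2}+A_{2}B_{1}-A_{2}B_{2}=A_{1}(B_{1}+B_{2})+A_{2}(B_{1}-B_{2}),
\end{equation}
the target inequality reads $S_{LGI}=Tr(\mathcal{B}W)\le 2\sqrt{2}$, so it suffices to prove the operator-norm bound $\|\mathcal{B}\|\le 2\sqrt{2}$ together with the fact that $W\mapsto Tr(\mathcal{B}W)$ is a normalized positive functional.

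First I would record the structural properties of the injected measurement observables. Because the $A_{i}=I\odot A_{i}^{(1)}$ act nontrivially only on the $t_{1}$ slot while the $B_{j}=B_{j}^{(2)}\odot I$ act only on the $t_{2}$ slot of the sequential tensor product, they commute, $[A_{i},B_{j}]=0$; and being dichotomic Hermitian history observables with eigenvalues $\pm 1$ they square to the identity, $A_{i}^{2}=B_{j}^{2}=I$. With these relations the computation of $\mathcal{B}^{2}$ collapses exactly as in the spatial case: the squared diagonal terms contribute $4I$ while the cross terms reorganize into commutators, giving the Landau identity
\begin{equation}
\mathcal{B}^{2}=4I-[A_{1},A_{2}]\,[B_{1},B_{2}].
\end{equation}

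Next I would bound the two commutators. Since $\|A_{i}\|=\|B_{j}\|=1$, the triangle inequality yields $\|[A_{1},A_{2}]\|\le 2$ and $\|[B_{1},B_{2}]\|\le 2$, hence $\|[A_{1},A_{2}][B_{1},B_{2}]\|\le 4$ and $\|\mathcal{B}^{2}\|\le 8$. As $\mathcal{B}$ is Hermitian, $\|\mathcal{B}\|=\|\mathcal{B}^{2}\|^{1/2}\le 2\sqrt{2}$. The proof then closes by invoking the properties of the mixed history state $W$: since $Tr\,W=1$ and $W\ge 0$ on the history algebra, the map $X\mapsto Tr(XW)$ is a state, so $Tr(\mathcal{B}W)\le\|\mathcal{B}\|\le 2\sqrt{2}$.

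The hard part is not the algebra but the justification that this algebra genuinely behaves like an operator $C^{*}$-algebra in the temporal setting. One must verify that the sequential tensor factors really furnish \emph{commuting} observables at the two times (so that $[A_{i},B_{j}]=0$ is legitimate rather than an artifact of notation), that $A_{i}^{2}=I$ survives the measurement injection $|\widetilde{H})=\alpha\,\mathbb{A}\mathbb{B}|H)\mathbb{A}^{\dagger}\mathbb{B}^{\dagger}$ at the level of history operators, and above all that $Tr(\cdot\,W)$ is a bona fide positive normalized functional on histories, i.e. that the history trace and inner product endow $\text{Proj}(\mathcal{H})$ with the $C^{*}$-structure asserted earlier in the text. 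Once these foundational points are secured, the value $2\sqrt{2}$ is forced by precisely the commutator identity that governs the spatial Tsirelson bound, which is exactly the equivalence the lemma is designed to expose.
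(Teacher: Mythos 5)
Your argument is correct, but it reaches the bound by a different algebraic route than the paper. The paper's own proof is terse and two-fold: it first notes that the history observables satisfy $A_{i}^{2}=B_{j}^{2}=I$ and can therefore be read as spin components along two directions, so that the spatial CHSH saturation argument transfers verbatim to the temporal setting; it then invokes Tsirelson's operator inequality, asserted to survive in the $\mathcal{C}^{*}$-algebra of history operators with the $\odot$-product,
\begin{equation}
A_{1}B_{1}+A_{1}B_{2}+A_{2}B_{1}-A_{2}B_{2}\;\leq\;\tfrac{1}{\sqrt{2}}\bigl(A_{1}^{2}+A_{2}^{2}+B_{1}^{2}+B_{2}^{2}\bigr)\;\leq\;2\sqrt{2}\,I,
\end{equation}
i.e.\ a sum-of-squares domination of the Bell operator, rather than your route via Landau's identity $\mathcal{B}^{2}=4I-[A_{1},A_{2}]\,[B_{1},B_{2}]$ followed by the commutator norm bounds and the $C^{*}$-property $\|\mathcal{B}\|^{2}=\|\mathcal{B}^{2}\|$. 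Both proofs rest on exactly the same structural premises (commutation of the $t_{1}$- and $t_{2}$-slot observables, dichotomy, and positivity/normalization of $W$), but your version buys two things the paper's does not: it is self-contained, whereas the paper essentially defers to Tsirelson's ``obvious finding'' and to the analogy with spatial spin correlations; and the commutator identity makes visible precisely where quantumness enters, since if either $[A_{1},A_{2}]=0$ or $[B_{1},B_{2}]=0$ the bound collapses to the classical value $2$. You are also right that the real burden lies in the foundational claims — that the sequential $\odot$-factors genuinely furnish commuting observables, that dichotomy survives the measurement injection, and that $Tr(\cdot\,W)$ is a state on the history algebra; the paper asserts the $\mathcal{C}^{*}$-structure of history operators but, like you, does not verify these points, so your proof is no less rigorous than the published one and is arguably more explicit about its assumptions.
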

\begin{proof}
The proof of this observation can be performed in similar to the spatial version of CHSH-Bell inequality under assumption that the states are represented by
entangled history states and for two possible measurements $\{A_{1}^{(1)}, A_{2}^{(1)}\}$ at time $t_{1}$ and two measurements $\{B_{1}^{(1)}, B_{2}^{(1)}\}$ at time $t_{2}$. These operators can be of dimension $2\times2$ meeting the condition $A_{i}^{2}=B_{j}^{2}=I$. Therefore, they can be interpreted as spin components along two different directions. In consequence, it is well-known that the above inequality is saturated for $2\sqrt{2}$ for a linear combination of tensor spin correlation that holds also for temporal correlations.
Additionally, one could also apply for this temporal inequality reasoning based on the following obvious finding \cite{Tsirelson} that holds also for the temporal scenario due to the structure of $\mathcal{C}^{*}$-Algebra of history operators with $\odot$-tensor operation:
\begin{eqnarray}
A_{1}B_{1}+A_{1}B_{2}+A_{2}B_{1}-A_{2}B_{2}&\leq&\\
\frac{1}{\sqrt{2}}(A_{1}^2+A_{2}^2+B_{1}^2+B_{2}^2)&\leq&
2\sqrt{2}I \nonumber
\end{eqnarray}

\end{proof}

For temporal correlations measurements can lead to counter-intuitive results which do not occur for spatial quantum resources. Let us reexamine the case of $GHZ$ states firstly shared as a spatial system of three entangled qubits among Alice, Bob and Charlie: $|\Psi_{ABC}\rangle=\frac{1}{\sqrt{2}}(|000\rangle + |111\rangle)$ which obviously leads to a separable state for any pair from this system, e.g. $\rho_{AB}=\frac{1}{2}(|00\rangle\langle 00|+ |11\rangle\langle 11|)$. Assume further that they can choose from dichotomic projective observables: $P_0=|0\rangle\langle 0|$ or $P_1=|1\rangle\langle 1|$, then in this multipartite case any pair cannot identify alone without the third party that they are part of the more complex entangled system. This is the core difference from the temporal analog of this state. \\
In the temporal case the situation is quite opposite for the temporal version of the GHZ-state which is a sign of qualitative difference between spatial and temporal resources. Alice, Bob and Charlie, having instances of the same system but at different times, can in each pair detect non-locality in time.

When we measure an average value of the aforementioned Bell-like temporal inequality:
\begin{equation}
\langle S_{LGI} \rangle=\langle A_{1}B_{1}+A_{1}B_{2}+A_{2}B_{1}-A_{2}B_{2} \rangle
\end{equation}
we consider an ensemble of systems from which each quarter is measured against the observables $A_{i}B_{j}$.  It is easy to observe that with a choice of observables:
$A_1=Z, A_2=(Z+X)/\sqrt{2}, B_1=Z, B_2=(Z-X)/\sqrt{2}$ we get effectively the average value: $\langle S_{LGI} \rangle= \sqrt{2}\langle XX+ZZ \rangle$. It is easy to observe that $\langle XX \rangle =\langle ZZ \rangle = 1$ and the Tsirelson maximum is saturated. However, what is important in this simple example is that consecutive measurements of both X and Z leaves the system in the same eigenstate for any number of time steps. As an immediate implications, one gets violation of monogamous Bell-like inequalities in space \cite{Pawlowski}:
\begin{equation}
    S_{AB}+S_{BC}\nleq 4
\end{equation}
since for the temporal tripartite system $ABC$ (B and C being instances of A at consecutive times) we get saturation for the AB pair and for the BC:
\begin{equation}
S_{\tau AB}+S_{\tau BC}=4\sqrt{2}
\end{equation}
This limit cannot be achieved by spatial correlations \cite{Pawlowski}.
The fundamental point about generation of such averaged Bell-like inequalities is that we operate actually with a bundle of different histories (formally a bundle of vectors from the consistent entangled history family set) starting with the same initial pre-selected state and finalizing with the same post-selected final state for the bundle but having different intermediary steps (in the above example with XXX and ZZZ quantum operations).

We can look at the problem of bounding temporal correlations also by prism of the two-state vector  formalism which is isomorphic to the entangled histories \cite{NowakowskiCohen}. The correlations can be described by the probabilistic boxes in non-signalling theory. The box is shared between two parties who give the input setting $\{x, y\}$ of the measuring devices and get the outputs $\{a, b\}$ with probability $p(ab|xy)$ being an entry of the join probability distribution matrix $P(ab|xy)=[p(ab|xy)]$. All entries of this matrix meet the non-negativity condition ($p(ab|xy)\geq 0$) and are normalized:$\forall_{x,y}\sum_{a,b}p(ab|xy)=1$ and the no-signalling condition imposed on the quantum correlations by the special relativity constraints: the marginals $p(a|x)$ and $p(b|y)$ are independent of settings y and x respectively, i.e. $\forall_{y,a,x}p(a|x)=\sum_b p(ab|xy)$ and $\forall_{x,b,y}p(b|y)=\sum_a p(ab|xy)$.
Then the Aharanov-Bergmann-Lebowitz (ABL) formula (\ref{ABL})  delivers a method for calculation of the measurements probability in between the initial time with the pre-selected state and the post-selected state at the final time of the analyzed quantum process.

In the case of series of X and Z measurements injected in the aforementioned histories considered in this section we get the following example for probability distribution with assumption that at times $t_1$ and $t_2$ the X observable is chosen and we get $|\uparrow_x\rangle$ results in both times:
\begin{equation}
p(\uparrow_x\uparrow_x|XX)=\frac{|\langle \Phi |\uparrow_x\rangle\langle\uparrow_x| |\uparrow_x\rangle\langle\uparrow_x| \Psi\rangle|^2}{\sum_{ab}p(ab|XX)}
\end{equation}
This is an operational method for generation of the whole probability distribution matrix. However, we should note that these experiments start with the same initial and final states but with different intermediate steps, thus, leading to a bundle of histories at times ${t_0, t_1, t_2, t_3}$ (Fig. \ref{GlobalHistory}):
\begin{equation}
    |H_{abxy}) \sim p(ab|XY)
\end{equation}
which can lead to violation of spatial quantum bounds on Bell-like inequalities.

\begin{figure}[h]
\centerline{\includegraphics[width=9.5cm]{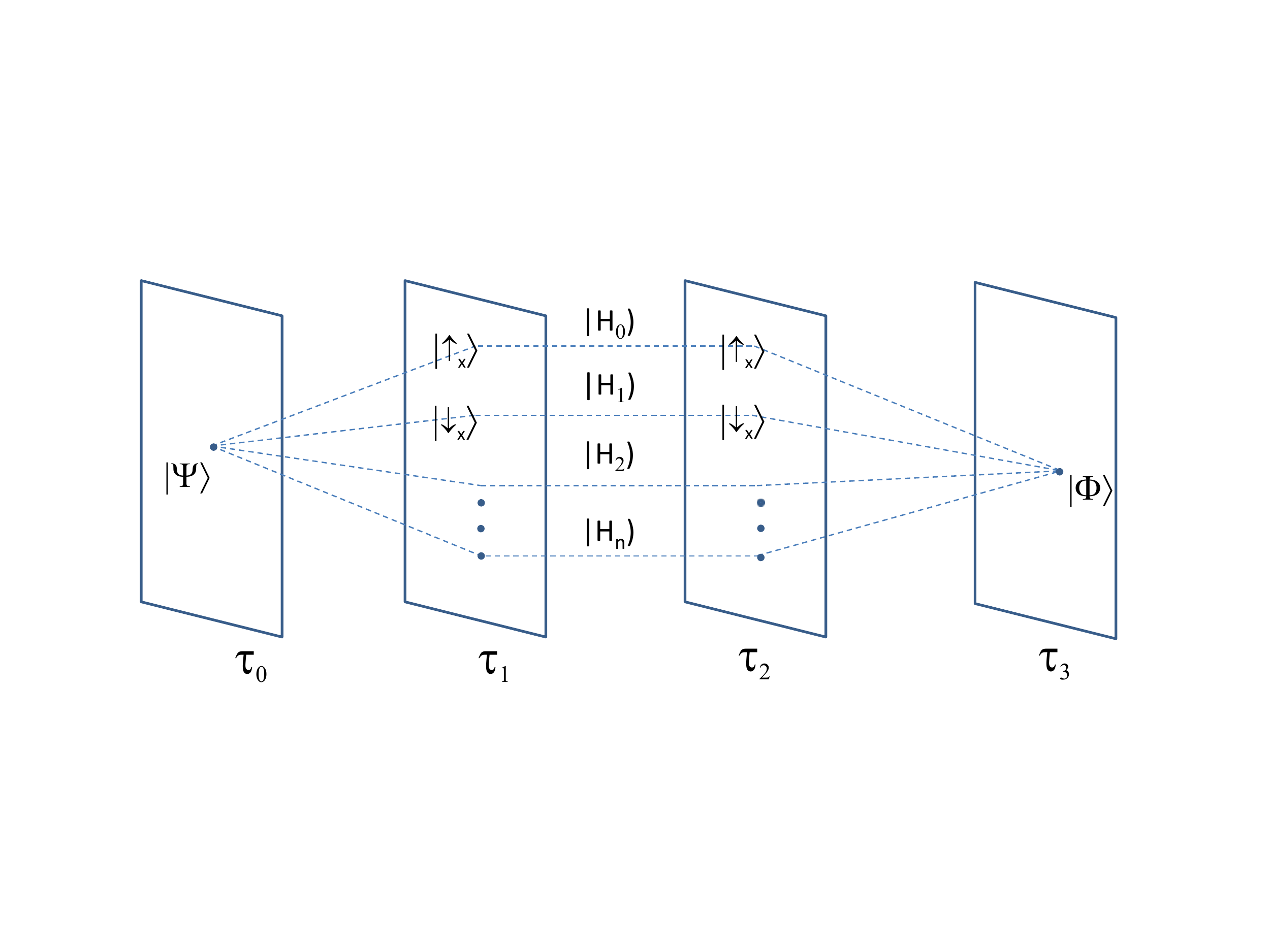}}
 \caption[GlobalHistory]{A bundle o histories at times ${\tau_0, \tau_1, \tau_2, \tau_3}$ with a pre-selected state $|\Psi\rangle$ and post-selected state $|\Phi\rangle$. Exemplary histories $|H_0)$ and $|H_1)$ with incorporated measurement results of X. This bundle contributes to violation of monogamous temporal Bell-like inequalities engaging different history states with the same initial and final states.}
 \label{GlobalHistory}
\end{figure}

We can then formulate generic bounds on temporal correlations of qubits in quantum theories (this result can be generalized to the qudits' case).

Let us assume that the quantum process occurs n times and that for any two times $\{t_i, t_{i+1}\}$ the quantum bound $\mathcal{Q}$ limits the temporal Bell-like functional on the matrix of probability distributions, i.e. $B_{\tau}(A_i,A_{i+1})=\mathcal{F}([p_{t_i, t_{i+1}}(ab|xy)])\leq \mathcal{Q}$ with the association of histories $|H_{abxy})$, then the process saturating the chain for such n-steps can be designed in such a way that each pair of times is a replication of two consecutive times, i.e. $\forall_i [p_{t_i, t_{i+1}}]=[p_{t_{i+2}, t_{i+3}}]$. This process is equivalent logically to a loop $t_0\rightarrow t_1 \rightarrow t_0\rightarrow t_1\ldots$. In consequence, we get the following quantum bound :

\begin{equation}
\sum_{i=0}^{n-1} B_{\tau}(A_i,A_{i+1})\leq \mathcal{Q}n
\end{equation}
that can be saturated to its maximal value.
As an implication for the LGIs one gets the following quantum bound:
\begin{equation}
    \sum_{i=1}^n S_{LGI}(A_0,A_i)\leq 2\sqrt{2}n
\end{equation}
which can be saturated to the maximal value in quantum world and which violates the spatial monogamy relations $ \sum_{i=1}^n B(A_0,A_i)\leq 2n$ (for $n\geq 2$) \cite{Pawlowski}.

We can conclude this section with a remark that a particular entangled history is monogamous but for a bundle of histories with the same pre-selected and post-selected stated one can get violation of monogamy. This is a novel feature of temporal correlations not paralleled in spatial domain.

\section{Polyamory of an ensemble of quantum histories}

One should take the viewpoint that the present paper treats about entangled histories, however, the mathematical concepts related to temporal correlations seem to play a predominant role in our interpretation of spatio-temporal correlations. There are other representations like the multi-time state formalism (MSVF), process matrices or pseudo-density matrices and super-density formalism which gain a lot of attention. In general, as proved in \cite{NowakowskiCohen} for MSVF and entangled histories, they lead to the same results but there are subtle differences in what they represent. This sometimes leads to confusion in interpretation of the results.

In this paper, we have proved that a particular history is monogamous that leads to the well-known Tsirelson bound on the LGI and the bound for multi-system settings.
Yet the problem of the lack of monogamy e.g. in the evaporating space-times is still a field of an active research \cite{Hawking, Page}. It is suggested \cite{Grudka} that the temporal correlations in time are rather polyamorous and the lack of monogamy emerging in evaporating space-times is naturally related to lack of monogamy of correlations of outputs of measurements performed at subsequent instances of time of a single system.

As a particularly important example of potential polyamory of the temporal correlations, we can reconsider the case described in \cite{Grudka}. The evolution of the system
is such that at some point of time $t_{0}$ the single system can be viewed in an arbitrary chosen reference frame as a tripartite system $HAB$ being in a pre-selected state:
\begin{equation}
|\Psi_{0}\rangle=|\psi\rangle\otimes|\Phi^{+}\rangle
\end{equation}
 where $H$ is in a definite state $|\psi\rangle$ and $AB$ are projected onto the maximally entangled state $|\Phi^{+}\rangle$.
Then at time $t_{1}$ the particles $H$ and $A$ are always post-selected onto the state $|\Phi^{+}\rangle$:
\begin{equation}
|\Psi_{1}\rangle=|\Phi^{+}\rangle\otimes|\psi\rangle\
\end{equation}
Hence, it is concluded that the particle $A$ is maximally entangled with $H$ and with particle $B$ violating monogamy of entanglement which is a form of polyamory in time.

Yet, when we look at this interesting case by a prism of results of this paper, we find out that there is no dichotomy in viewing the monogamy of entanglement in time.
We shall write a two-time history of the system $HAB$ as:
\begin{equation}
|H)=[\Psi_{1}]\odot[\Psi_{0}]
\end{equation}
where one finds the particle $A$ maximally correlated with the particle $B$ in one reference frame at time $t_{0}$ and maximally correlated with the particle $H$ at time $t_{1}$. We need to emphasize that if we take a careful look at $A$ in the history $|H)$, it lives in a doubled space $\mathcal{B}(H_{A})\odot\mathcal{B}(H_{A})$
(one in time $t_{0}$ and one in time $t_{1}$).

We can consider an alternative interesting approach to address this case but this time explaining violation of monogamy in time by means of a mixture of entangled histories. One takes a perspective of a single system evolving from $H$ being prepared in a maximally mixed state $\rho_{H}=\frac{\mathbb{I}}{N}$ ($N=\dim \mathcal{B}(H_{A})$).
Then a set of consecutive measurements is performed: first we measure an observable $X=\sigma_x$, then $Y=\sigma_y$ and finally $Z=\sigma_y$.
It is straightforward to show that for consecutive measurements of X and Y on a mixed state, one gets the joint probability distribution $P(x,y|X,Y)$ in similarity to measurements of X and Y performed on a maximally entangled spatial state of a bipartite system
$|\Phi_{+}\rangle = \frac{1}{\sqrt{2}}(|00\rangle + |11\rangle)$. The same reasoning applies further to $p(y,z|Y,Z)$.
We conclude that these correlations violate temporal Bell-like inequalities and in consequence, we could expect also violation of
monogamy of quantum entanglement in time since we consider a single system.

This time we operate with an ensemble of entangled histories leading to violation of monogamous Bell-like temporal inequalities.
The action of the unitary Pauli operations (Fig 3.) on the subsystem A in between the pre-selected $|\Phi_{+}\rangle$ and the post-selected $|\Phi_{+}\rangle$ can be represented as a history $|H_{\Gamma})$:
\begin{equation}\label{Pauli}
|H_{\Gamma})=[\Phi_{+}]\odot[\Phi_{+}]\odot[\Phi_{+}]\odot[\Phi_{+}]\odot[\Phi_{+}]
\end{equation}
with bridging operators $U(t_{1},t_{2})=\sigma_{x}\otimes I_{B}$, $U(t_{2},t_{3})=\sigma_{y}\otimes I_{B}$, $U(t_{3},t_{4})=\sigma_{z}\otimes I_{B}$
and $U(t_{4},t_{5})=I$ with post-selected state $|\Phi_{+}\rangle$.
Interestingly, if we ask now for a history of the subsystem $A$ in this evolution, we get a temporal version of an entangled GHZ state:
\begin{equation}
|H_{A})=\frac{1}{2}([0]\odot[0]\odot[0]\odot[0]\odot[0]+[1]\odot[1]\odot[1]\odot[1]\odot[1])
\end{equation}
with corresponding evolution $U_{A}(t_{1},t_{2})=\sigma_{x}$, $U_{A}(t_{2},t_{3})=\sigma_{y}$, $U_{A}(t_{3},t_{4})=\sigma_{z}$
and $U_{A}(t_{4},t_{5})=I$
where $|H_{A})$ is derived from the global $|H_{\Gamma})$ tracing out $B$ party over all times (\cite{NowakowskiCohen}) keeping consistency of the derived reduced evolution of the subsystem (note that it is not a mere analogy of spatial trace out operation over all times, the binding evolution between the time steps has to be kept).

\begin{figure}[h]
\centerline{\includegraphics[width=9cm]{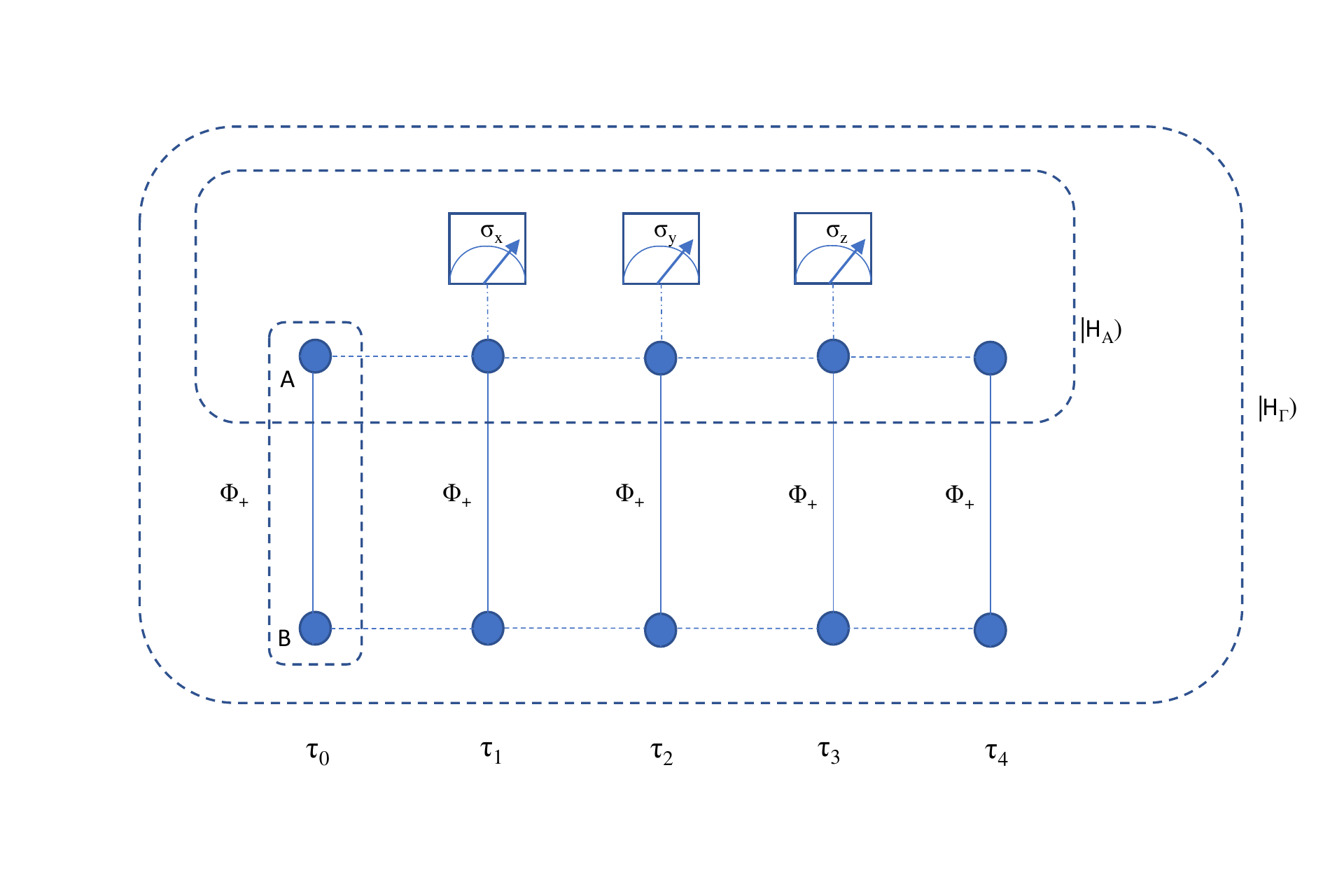}}
 \caption[Evolution]{$|H_{\Gamma})$ represents multi-time global evolution of the bipartite system $AB$ pre-selected and post-selected in state $\Phi_+$ with consecutive measurements on the subsystem $A$. One gets entangled temporal version of GHZ state for a history $|H_A)$ from a global separable history $|H_{\Gamma})$ which is a phenomenon unparalleled for spatial entanglement.  }
 \label{Global History}
\end{figure}

This local evolution can be represented equivalently with the following underlying quantum structure with trivial evolution:
\begin{eqnarray}\label{localev}
|H_{A})&=&\frac{1}{2}([0]\odot\sigma_{z}[0]\sigma_{z}^{\dag}\odot\sigma_{y}[0]\sigma_{y}^{\dag}\odot\sigma_{x}[0]\sigma_{x}^{\dag}\odot[0]\nonumber\\
&+&[1]\odot\sigma_{z}[1]\sigma_{z}^{\dag}\odot\sigma_{y}[1]\sigma_{y}^{\dag}\odot\sigma_{x}[1]\sigma_{x}^{\dag}\odot[1])
\end{eqnarray}

The above global and local history is constructed without a definite state of a local system A at intermediary instances of time. However,
if one reads the local state of A and gets a definite state $|+\rangle$ or $|-\rangle$ for a given measurement setting, then the history is disentangled and
effectively the considered history for the process is separable. Noticeably, one cannot forget that measurement itself is an inherent
element of the process and co-creates the particular history.  \\
As an example, consider probability  for the joint result $P(+++|XYZ)$ on the $A$ subsystem in evolution represented by $|H_{A})$ but with consecutive measurement settings $\sigma^+_x$, $\sigma^+_y$ and $\sigma^+_z$ :
\begin{equation}
P(+++|XYZ)=\frac{1}{2}|\langle 0|\sigma^+_z\sigma^+_y\sigma^+_x|0 \rangle+\langle 1|\sigma^+_z\sigma^+_y\sigma^+_x| 1\rangle|^2
\end{equation}
due to cancellation of the zero-probability evolution elements.
Thus, for a history:
$|H_{A})=\frac{1}{4}I\odot[z^{+}]\odot[y^{+}]\odot[x^{+}]\odot I $ (one puts $z^+=\frac{I+\sigma_{z}}{2}$, $x^+=\frac{I+\sigma_{x}}{2}$ etc.) we get the realization
probability $Pr(|H_{A}))=Tr(K(|H_{A}))K(|H_{A})))$.
It is worth mentioning that that the same result can be derived applying the ABL formula  \cite{ABL} as discussed in the previous section.

It is to be emphasized that if we measure e.g. $P(00|XY)$, there is no entanglement for this two-point function since we get a direct result of $x=0$ for $\sigma_x$
and $y=0$ for $\sigma_y$ disentangling the particular history. There is also no violation of monogamy of quantum entanglement in time. Conversely, if we impose violation of monogamy of quantum entanglement in time, then the physical structure of the probability amplitude of the underlying quantum process should correspond to that. Yet, as proved above, we rather get a temporal version of the $GHZ$ state for a multi-time process and violation of monogamy for the ensemble of histories with different measurement results for (\ref{localev}) corresponding to the whole probability distribution $P(xyz|XYZ)$.

There is one more important aspect related to calculation of statistics for consecutive measurements of the single system. One cannot assume in calculation of $P(xy|XY)$ and $P(yz|YZ)$ that an observer can make the consecutive measurements more or less independently. Such an assumption  imposed in analogy to the measurements made on ensembles of quantum pairs for the spatial entanglement would lead to incorrect results. For the sake of temporal correlations of a single system, measurements themselves are part of the particular histories (evolution). Thus, if we get for the intermediate step e.g. $P(xy=0|XY)$, we need to keep $y=0$ for further evolution of the particular history and further calculation of $P(yz|YZ)$ with $y=0$ even if we take the ensemble of evolutions for the last step.


We need to remind that the initial and final conditions on the evolution of the system imply that not all intermediate measurement results are possible as we count only non-zero probability amplitudes.
It is pointed out that what mimics violation of monogamy of entanglement is actually just a kind of polyamory in time but monogamy of entanglement for a particular evolution still holds.

\section{Conclusions}

The central idea of this paper was to show how the quantum correlations in time can violate Bell-like monogamous inequalities conserving monogamy for particular processes.

We proved that a particular entangled history, which can be associated with a quantum propagator, is monogamous to conserve its consistency throughout time. However, the evolving systems can still violate monogamous Bell-like multi-time inequalities. This dichotomy, being a novel feature of temporal correlations, has its roots in the measurement process which is discussed by means of the bundles of entangled histories. The measurement process is an inherent part of the quantum evolution and different measurement outcomes generate different instances of the considered evolution. We introduced and discussed a concept of a probabilistic mixture of quantum processes by means of which we clarify why the spatial-like Bell-type monogamous inequalities are further violated.
We derived the quantum bound for multi-time Bell-like inequalities basing on the Tsirelson bound on temporal Bell-like inequalities derived from the entangled histories approach. This result is interesting due to the fact that previous methods were based on the linear optimization on the set of allowed probabilistic distributions generated by quantum measurements.
In the context of the black hole information paradox, it is also pointed out that what mimics violation of monogamy of temporal entanglement is actually just a kind of polyamory in time but monogamy of entanglement for a particular evolution still holds. We employed also a novel feature of temporal correlations which seem to be separable for a global evolution generating locally entangled temporal states to violate monogamous Bell-like inequalities in space-time.


There are many open problems and questions for further research in this field. Future research can be focused on analysis of non-locality in time and finding more appropriate mathematical structures that will enable easier calculations of measurements' outputs for observers in different reference frames. Monogamy of entanglement in time and non-locality in time can be probably applied also in quantum cryptography and should give some new insights into non-sequential quantum algorithms and information processing. Finally, as stated in the paper the subject is fundamental for understanding relativistic quantum information theory and brings new prospects for this field.

\section{Acknowledgments}
Acknowledgments to Eliahu Cohen, Pawel Horodecki and Jan Tuziemski for discussions about temporal correlations. Part of this work was performed at the National Quantum Information Center in Gdansk.


\begin{thebibliography}{8}
\bibitem{Lutk1}
G. O. Myhr, N. L\"{u}tkenhaus, Phys. Rev. A {\bf 79}, 062307 (2009).
\bibitem{Lutk2}
G. O. Myhr et. al., Phys. Rev. A {\bf 79}, 042329 (2009).
\bibitem{Lutk3}
T. Moroder, N. L\"{u}tkenhaus, Phys. Rev. A {\bf 74}, 052301 (2006).
\bibitem{Lutk4}
Jianxin Chen et. al, Phys. Rev. A {\bf 90}, 032318 (2014).
\bibitem{Devetak05}
I. Devetak and A. Winter, Proc. R. Soc. Lond. A {\bf 461}, 207 (2005).

\bibitem{KLi}
K. Li, A. Winter, Found. Phys. {\bf 48 }, 910 (2018).

\bibitem{Bell}
J. S. Bell, Physics {\bf 1} (3), 195 (1964).
\bibitem{LGI}
A. J. Leggett and A. Garg, Phys. Rev. Lett. 54, 857 (1985).
\bibitem{LGI2}
A. J. Leggett, J. Phys.: Condens. Matter {\bf 14}, R415 (2002).

\bibitem{Aspect}
A. Aspect, J. Dalibard, G. Roger, Phys. Rev. Lett. {\bf 49}, 1804 (1982).
\bibitem{Freedman}
S. J. Freedman, J. F. Clauser, Phys. Rev. Lett. {\bf 28}, 938 (1972).

\bibitem{ABL}
Y. Aharonov, P. G. Bergmann, and J.L. Lebowitz, Phys. Rev. {\bf 134}, 1410 (1964).

\bibitem{AAD}
Y. Aharonov, D. Albert, and S. D'Amato, Phys. Rev. D {\bf 32}, 1975 (1985).

\bibitem{MTS1}
Y. Aharonov, S. Popescu, J. Tollaksen, and L. Vaidman, Phys. Rev. A {\bf 79}, 052110 (2009)

\bibitem{ABL}
Y. Aharonov, P. G. Bergmann, and J.L. Lebowitz, Phys. Rev. {\bf 134}, 1410 (1964).

\bibitem{Properties}
Y. Aharonov and L. Vaidman, Phys. Rev. A {\bf 41}, 11-20 (1990).

\bibitem{TSVFR}
Y. Aharonov and L. Vaidman, The two-state vector formalism of quantum mechanics in Time in Quantum Mechanics (eds. J. G. Muga {\it et al.}) 369-412 (Springer, 2002).

\bibitem{TTI}
Y. Aharonov, E. Cohen and T. Landsberger, Entropy {\bf 19}, 111 (2017).

\bibitem{Cot}
J. Cotler and F. Wilczek, Phys. Scr. {\bf 168}, 014004 (2016).
\bibitem{WC3}
J. Cotler {\it et al.},
Ann. Phys. {\bf 387}, 334 (2017).
\bibitem{Chu}
Chu, W.-J. et al., Sci. Rep. {\bf 6}, 28351 (2016).
\bibitem{White}
Ringbauer, M., Costa, F., Goggin, M.E. et al., npj Quantum Inf {\bf 4}, 37 (2018).
\bibitem{Marletto}
C. Marletto et al., Entropy {\bf 22}, 228 (2020).


\bibitem{Paradoxes}
Y. Aharonov and D. Rohrlich. Quantum paradoxes: quantum theory for the perplexed, Wiley, Weinheim (2008).

\bibitem{Pigeon}
Y. Aharonov, F. Colombo, S. Popescu, I. Sabadini, D.C. Struppa, and J. Tollaksen, Proc. Natl. Acad. Sci. {\bf 113}, 532-535 (2016).

\bibitem{Dis}
Y. Aharonov, E. Cohen, A. Landau, and A.C. Elitzur, Sci. Rep. {\bf 7}, 531 (2017).

\bibitem{Lloyd1}
S. Lloyd, L. Maccone, R. Garcia-Patron, V. Giovannetti, and Y. Shikano, Phys. Rev. D {\bf 84}, 025007 (2011).

\bibitem{Lloyd2}
S. Lloyd, L. Maccone, R. Garcia-Patron, V. Giovannetti, Y. Shikano, S. Pirandola, L.A.
Rozema, A. Darabi, Y. Soudagar, L.K. Shalm, and A.M. Steinberg, Phys. Rev. Lett. {\bf 106}, 040403
(2011).


\bibitem{Each}
Y. Aharonov, S. Popescu, and J. Tollaksen, in Quantum theory: a two-time success story, ed. by D.C. Struppa and J. Tollaksen, Springer, Milano (2014), pp. 21-36.

\bibitem{MTS2}
R. Silva, Y. Guryanova, N. Brunner, N. Linden, A.J. Short, and S. Popescu, Phys. Rev. A {\bf 89}, 012121 (2014).

\bibitem{NewSandu}
R. Silva, Y. Guryanova, A.J. Short, P. Skrzypczyk, N. Brunner, and S. Popescu, Preprint quant-ph/1701.08638.




\bibitem{CN}
E. Cohen and M. Nowakowski, Phys. Rev. D {\bf 97}, 088501 (2018).
\bibitem{NowakowskiCohen}
M. Nowakowski, E. Cohen, P. Horodecki, Phys. Rev. A 98, 032312 (2018).


\bibitem{EX1}
C. Robens et al., Phys. Rev. A 5, 011003 (2015).
\bibitem{EX2}
A. Asadin, C. Brukner and P. Rabl, Phys. Rev. Lett. {\bf 112}, 190402 (2014).
\bibitem{EX3}
H. Katiyar et al., Phys. Rev. A {\bf 87}, 052102 (2013).
\bibitem{EX4}
A. M. Souza, I. S. Oliveira and R. S. Sarthour, New J. Phys. 13, 053023 (2011).

\bibitem{Vedral}
J. Fitzsimons, J. Jones, V. Vedral, Sci. Rep. {\bf 5} 18281 (2015).

\bibitem{Tsirelson}
B. S. Cirel'son, Lett. Math. Phys. {\bf 4}, 93-100 (1980).
\bibitem{Pawlowski}
M. Pawlowski, C. Brukner, Phys. Rev. Lett. {\bf 102}, 030403 (2009).


\bibitem{Schwartz}
M. D. Schwartz, Quantum Field Theory and the Standard Model, Cambridge, 2013.

\bibitem{Feynman}
R. P. Feynman, Rev. Mod. Phys.
{\bf 20}, 367 (1948).



\bibitem{Vedral}
C. Brukner, S. Taylor, S. Cheung, V. Vedral, Quantum Entanglement in Time, Preprint quant-ph/0402127, (2004).
\bibitem{Fritz}
T. Fritz, New J. Phys. 12, 083055 (2010).
\bibitem{Budroni}
C. Budroni, T. Moroder, M. Kleinmann, O. Gühne, Phys. Rev. Lett. 111, 020403 (2013).

\bibitem{Hawking}
Hawking, S., Phys. Rev. D {\bf 72}, 084013 (2005).
\bibitem{Page}
Page, D.M., Phys. Rev. Lett. {\bf 71} 3743 (1993).
\bibitem{Grudka}
Grudka, A., Hall, M.J.W., Horodecki, M. et al. J. High Energ. Phys. 2018, 45 (2018).
\bibitem{TraceTime}
For a history $|Y_{t_{n}\ldots t_{0}})=F_{n}\odot F_{n-1}\odot\ldots\odot F_{1}\odot F_{0}$ acting on a space $\mathcal{H}=\mathcal{H}_{t_{n}}\otimes\dots\otimes\mathcal{H}_{t_{0}}$, a partial trace over times $\{t_{j}\ldots t_{i+1} t_{i}\}$ $(j\geq i)$ is:
$
Tr_{t_{j}\ldots t_{i+1}t_{i}} |Y_{t_{n}\ldots t_{0}})(Y_{t_{n}\ldots t_{0}}|=\sum_{k=1}^{\dim\mathcal{F}} (e_{k}|Y_{t_{n}\ldots t_{0}})(Y_{t_{n}\ldots t_{0}}|e_{k})
$
where $\mathcal{F}=\{|e_{k})\}$ creates an orthonormal consistent family of histories on times $\{t_{j}\ldots t_{i+1} t_{i}\}$ and the strong consistency condition for partial histories holds for base histories, i.e. $(e_{i}|e_{j})=Tr[K(|e_{i}))^{\dag}K(|e_{j}))]=\delta_{ij}$.


\end{thebibliography}
\end{document}